\newtheorem{theorem}{\bf Theorem}
\begin{document}
%
\title{Approaching Blokh-Zyablov Error Exponent with Linear-Time Encodable/Decodable Codes }
%
%
\author{Zheng~Wang, ~\IEEEmembership{Student Member,~IEEE}, Jie~Luo, ~\IEEEmembership{Member,~IEEE}
\thanks{The authors are with the Electrical and Computer Engineering Department, Colorado State University, Fort Collins, CO 80523. E-mail: \{zhwang, rockey\}@engr.colostate.edu. }
}

\maketitle

\begin{abstract}
Guruswami and Indyk showed in \cite{ref Guruswami05} that Forney's error exponent can be achieved with linear coding complexity over binary symmetric channels. This paper extends this conclusion to general discrete-time memoryless channels and shows that Forney's and Blokh-Zyablov error exponents can be arbitrarily approached by one-level and multi-level concatenated codes with linear encoding/decoding complexity. The key result is a revision to Forney's general minimum distance decoding algorithm, which enables a low complexity integration of Guruswami-Indyk's outer codes into the concatenated coding schemes.
\end{abstract}

\begin{keywords}
coding complexity, concatenated code, error exponent
\end{keywords}

%
\IEEEpeerreviewmaketitle


\section{Introduction}
\label{SectionI}
Consider communication over a discrete-time memoryless channel modeled by a conditional point mass function (PMF) or probability density function (PDF) $p_{Y|X}(y|x)$, where $x\in X$ and $y\in Y$ are the input and output symbols, $X$ and $Y$ are the input and output alphabets, respectively. Let $\mathcal{C}$ be the Shannon capacity. Fano showed in \cite{ref Fano61} that the minimum error probability $P_e$ for block channel codes of rate $R$ and length $N$ is bounded by
\begin{equation}
\lim_{N\to \infty}-\frac {\log P_e}{N}\ge E(R),
\end{equation}
where $E(R)$ is a positive function of channel transition probabilities, known as the error exponent. For finite input and output alphabets, without coding complexity constraint,  the maximum achievable $E(R)$ is given by Gallager in \cite{ref Gallager65},
\begin{equation}\label{GallagerE}
E(R)= \max_{p_X}E_L(R,p_X),
\end{equation}
where $p_X$ is the input distribution, and $E_L(R,p_X)$ is given for different values of $R$ as follows,
\begin{eqnarray}\label{GallagerE1}
\begin{array}{ll} \max_{\rho \ge1}\left \{-\rho R+E_x(\rho, p_X)\right \}& 0 \le R \le R_x \\
-R+E_{0}(1,p_X) &  R_x \le R \le R_{crit} \\
\max_{0\le\rho\le1}\left \{-\rho R+E_0(\rho, p_X)\right \} & R_{crit} \le R \le \mathcal{C}.
\end{array}
\end{eqnarray}
The definitions of other variables in (\ref{GallagerE1}) can be found in \cite{ref Forney66}. If we replace the PMF by PDF, the summations by integrals and the $\max$ operators by $\sup$ in (\ref{GallagerE}), (\ref{GallagerE1}), the maximum achievable error exponent for continuous channels, i.e., channels whose input and/or output alphabets are the set of real numbers \cite{ref Gallager65}, is still given by (\ref{GallagerE}).

In \cite{ref Forney66}, Forney proposed a one-level concatenated coding scheme, which can achieve the following error exponent, known as Forney's exponent, for any rate $R<\mathcal{C}$ with a complexity of $O(N^4)$.
\begin{equation}
E_c(R)=\max_{r_o\in\left [\frac{R}{\mathcal{C}},1 \right ]}(1-r_o)E\left (\frac{R}{r_o}\right ),
\label{EcR}
\end{equation}
where $r_o$ and $R$ are the outer and the overall rates, respectively. Forney's coding scheme concatenates a maximum distance separable (MDS) outer error-correction code with well performed inner channel codes. To achieve $E_c(R)$, the decoder is required to exploit reliability information from the inner codes using a general minimum distance (GMD) decoding algorithm \cite{ref Forney66}. Forney's GMD algorithm essentially carries out outer code decoding, under various conditions, for $O(N)$ times. The overall decoding complexity of $O(N^4)$ is due to the fact that the outer code (which is a Reed-Solomon code) used in \cite{ref Forney66} has a decoding complexity of $O(N^3)$. Forney's concatenated codes were generalized to multi-level concatenated codes, also known as the generalized concatenated codes, by Blokh and Zyablov in \cite{ref Blokh82}. As the order of concatenation goes to infinity, the error exponent approaches the following Blokh-Zyablov bound (or Blokh-Zyablov error exponent) \cite{ref Blokh82}\cite{ref Barg05}.
\begin{equation}\label{BZBound}
E^{(\infty)}(R)=\max_{p_X, r_o\in {\left[\frac{R}{\mathcal{C}},1\right ]}}\left (\frac{R}{r_o}-R\right)\left [\int^{\frac{R}{r_o}}_0\frac{dx}{E_L(x, p_X)}\right ]^{-1}.
\end{equation}

In \cite{ref Guruswami05}, Guruswami and Indyk proposed a family of linear-time encodable/decodable nearly MDS error-correction codes. By concatenating these codes (as outer codes) with {\it fixed-lengthed} binary inner codes, together with Justesen's GMD algorithm \cite{ref Justesen72}, Forney's error exponent was shown to be achievable over binary symmetric channels (BSCs) with a complexity of $O(N)$ \cite{ref Guruswami05}, i.e., linear in the codeword length. The number of outer code decodings required by Justesen's GMD algorithm is only a constant\footnote{Strictly speaking, the required number of outer code decodings is linear in the inner codeword length, which is fixed at a reasonably large constant.}, as opposed to $O(N)$ in Forney's case \cite{ref Forney66}. Since each outer code decoding has a complexity of $O(N)$, upper-bounding the number of outer code decodings by a constant is required for achieving the overall linear complexity. Because Justesen's GMD algorithm assumes binary channel outputs \cite{ref Justesen72}\cite{ref Guruswami01}, achievability of Forney's exponent was only proven for BSCs in \cite[Theorem 8]{ref Guruswami05}.

In this paper, we show that Forney's GMD algorithm can be revised to carry out outer code decoding for only a constant number of times\footnote{The revision can also be regarded as an extension to Justesen's GMD decoding given in \cite{ref Justesen72}.}. With the help of the revised GMD algorithm, by using Guruswami-Indyk's outer codes with fixed-lengthed inner codes, one-level and multi-level concatenated codes can arbitrarily approach Forney's and Blokh-Zyablov exponents with linear complexity, over general discrete-time memoryless channels.

\section{Revised GMD Algorithm and Its Impact on Concatenated Codes}
\label{SectionII}

Consider one-level concatenated coding schemes. Assume, for an arbitrarily small $\varepsilon_1>0$, we can construct a linear encodable/decodable outer error-correction code, with rate $r_o$ and length $N_o$, which can correct $t$ symbol errors and $d$ symbol erasures so long as $2t+d<N_o(1-r_o-\varepsilon_1)$. Note that this is possible for large $N_o$ as shown by Guruswami and Indyk in \cite{ref Guruswami05}. To simplify the notations, we assume $N_o(1-r_o-\varepsilon_1)$ is an integer. The outer code is concatenated with suitable inner codes with rate $R_i$ and fixed length $N_i$. The rate and length of the concatenated code are $R=r_oR_i$ and $N=N_oN_i$, respectively. In Forney's GMD decoding, inner codes forward not only the estimates $\hat{\mbox{\boldmath $x$}}_m=[\hat{x}_1, \dots, \hat{x}_i, \dots, \hat{x}_{N_o}]$ but also a reliability vector $\mbox{\boldmath $\alpha$}=[\alpha_1, \dots, \alpha_i, \dots, \alpha_{N_o}]$ to the outer code, where $\hat{x}_i\in GF(q)$, $0\le \alpha_i \le 1$ and $1 \le i \le N_o$. Let
\begin{equation}
s(\hat{x},x)= \left \{ \begin{array}{ll}+1 & x=\hat{x}\\
-1 & x\neq \hat{x}\end{array} \right . .
\end{equation}
For any outer codeword $\mbox{\boldmath $x$}_m=[x_{m1},x_{m2},\dots,x_{mN_o}]$, define a dot product $\mbox{\boldmath $\alpha$}\cdot\mbox{\boldmath $x$}_m$ as follows
\begin{equation}
\mbox{\boldmath $\alpha$}\cdot\mbox{\boldmath $x$}_m = \sum_{i=1}^{N_o} \alpha_is(\hat{x}_i, x_{mi})=\sum_{i=1}^{N_o} \alpha_is_i.
\end{equation}

\begin{theorem}{\label{Theorem1}}
There is at most one codeword $\mbox{\boldmath $x$}_m$ that satisfies
\begin{equation}
\mbox{\boldmath $\alpha$}\cdot\mbox{\boldmath $x$}_m>N_o(r_o+\varepsilon_1).
\end{equation}
\end{theorem}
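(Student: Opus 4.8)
The plan is to argue by contradiction using a pairwise averaging estimate of the same flavor as Forney's original GMD analysis. Suppose two distinct outer codewords $\mbox{\boldmath $x$}_m$ and $\mbox{\boldmath $x$}_{m'}$ both satisfy $\mbox{\boldmath $\alpha$}\cdot\mbox{\boldmath $x$}_m > N_o(r_o+\varepsilon_1)$ and $\mbox{\boldmath $\alpha$}\cdot\mbox{\boldmath $x$}_{m'} > N_o(r_o+\varepsilon_1)$. First I would extract the minimum-distance property of the outer code from its assumed decoding guarantee: setting the number of erasures $d=0$, the code corrects any pattern of $t$ symbol errors whenever $2t<N_o(1-r_o-\varepsilon_1)$, which forces any two distinct codewords to differ in at least $D:=N_o(1-r_o-\varepsilon_1)$ coordinates. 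Consequently $\mbox{\boldmath $x$}_m$ and $\mbox{\boldmath $x$}_{m'}$ agree in at most $N_o-D=N_o(r_o+\varepsilon_1)$ coordinates.

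Next I would add the two dot products and analyze the result coordinate by coordinate. Writing $s_i=s(\hat{x}_i,x_{mi})$ and $s_i'=s(\hat{x}_i,x_{m'i})$, we have $\mbox{\boldmath $\alpha$}\cdot\mbox{\boldmath $x$}_m+\mbox{\boldmath $\alpha$}\cdot\mbox{\boldmath $x$}_{m'}=\sum_{i=1}^{N_o}\alpha_i(s_i+s_i')$. For each coordinate $i$ with $x_{mi}=x_{m'i}$ we have $s_i=s_i'$, hence $\alpha_i(s_i+s_i')\le 2\alpha_i\le 2$. For each coordinate $i$ with $x_{mi}\neq x_{m'i}$, the inner estimate $\hat{x}_i$ can coincide with at most one of the two distinct symbols, so at least one of $s_i,s_i'$ equals $-1$; thus $s_i+s_i'\le 0$ and, since $\alpha_i\ge 0$, $\alpha_i(s_i+s_i')\le 0$. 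Summing the per-coordinate bounds over the at most $N_o-D$ coordinates of agreement and bounding the remaining contributions by $0$ gives $\mbox{\boldmath $\alpha$}\cdot\mbox{\boldmath $x$}_m+\mbox{\boldmath $\alpha$}\cdot\mbox{\boldmath $x$}_{m'}\le 2(N_o-D)=2N_o(r_o+\varepsilon_1)$.

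This contradicts the assumption that each of the two dot products strictly exceeds $N_o(r_o+\varepsilon_1)$, so at most one codeword can satisfy the stated inequality, which proves the theorem. I expect the only points needing care to be the bookkeeping of strict versus non-strict inequalities --- the strictness in the hypothesis is precisely what closes the contradiction against the non-strict per-coordinate bound --- and the derivation that the stated error/erasure-correction capability implies minimum distance at least $N_o(1-r_o-\varepsilon_1)$; both are routine and introduce no real obstacle.
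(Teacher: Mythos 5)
Your argument is correct in substance and is precisely the standard pairwise-averaging proof of Forney's Theorem 3.1, which the paper invokes by citation without reproducing; so you have supplied the proof the authors deferred to. One small repair: extracting the minimum distance from the \emph{errors-only} case ($d=0$, $2t<D$ with $D:=N_o(1-r_o-\varepsilon_1)$) only yields $t\le\lfloor (D-1)/2\rfloor$ and hence $d_{\min}\ge D-1$ when $D$ is even, which is not enough to close your contradiction (you would get the non-strict bound $2N_o(r_o+\varepsilon_1)+2$). Instead take the \emph{erasures-only} case $t=0$, $d=D-1$: correcting $D-1$ erasures forces $d_{\min}\ge D$ exactly, after which your coordinate-by-coordinate estimate $\mbox{\boldmath $\alpha$}\cdot\mbox{\boldmath $x$}_m+\mbox{\boldmath $\alpha$}\cdot\mbox{\boldmath $x$}_{m'}\le 2(N_o-D)=2N_o(r_o+\varepsilon_1)$ goes through verbatim and contradicts the strict hypotheses.
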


Theorem \ref{Theorem1} is implied by Theorem 3.1 in \cite{ref Forney66}.

Rearrange the weights in ascending order of their values and let $i_1, \dots, i_j, \dots, i_{N_o}$ be the indices such that
\begin{equation}
\alpha_{i_1}\le \dots\le \alpha_{i_j}\le \dots\le \alpha_{i_{N_o}}.
\end{equation}
Define $\mbox{\boldmath$q$}_k=[q_k(\alpha_1),\dots, q_k(\alpha_j), \dots, q_k(\alpha_{N_o})]$, for $0 \le k < 1/\varepsilon_2$, where $\varepsilon_2>0$ is a positive constant with $1/\varepsilon_2$ being an integer, and $q_k(\alpha_{i_j})$ is given by
\begin{eqnarray}
q_k(\alpha_{i_j})= \left \{ \begin{array}{ll}0 & \mbox{if} \quad \alpha_{i_j} \le k\varepsilon_2\quad \\ & \mbox{and}\quad i_j\le N_o(1-r_o-\varepsilon_1)\\
1 & \mbox{otherwise}\end{array} \right . .
\end{eqnarray}
Define dot product $\mbox{\boldmath $q$}_k\cdot \mbox{\boldmath $x$}_m$ as
\begin{equation}
\mbox{\boldmath $q$}_k\cdot \mbox{\boldmath $x$}_m=\sum_{i=1}^{N_o} q_k(\alpha_{i})s(\hat{x}_i,x_{mi})=\sum_{i=1}^{N_o} q_k(\alpha_i)s_i.
\end{equation}
Then following theorem gives the key result that enables the revision of Forney's GMD decoder.

\begin{theorem}{\label{Theorem2}}
If $\mbox{\boldmath$\alpha$}\cdot\mbox{\boldmath $x$}_m>N_o\left (\frac{\varepsilon_2}{2}+(r_o+\varepsilon_1)(1-\frac{\varepsilon_2}{2})\right)$, then for some
$0 \le k < 1/\varepsilon_2$, $\mbox{\boldmath$q$}_k\cdot$$\mbox{\boldmath $x$}_m>N_o(r_o+\varepsilon_1)$.
\end{theorem}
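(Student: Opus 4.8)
The plan is to argue by contraposition: assuming $\mbox{\boldmath $q$}_k\cdot\mbox{\boldmath $x$}_m\le N_o(r_o+\varepsilon_1)$ for all $k$ with $0\le k<1/\varepsilon_2$, I would show $\mbox{\boldmath $\alpha$}\cdot\mbox{\boldmath $x$}_m\le N_o\!\left(\frac{\varepsilon_2}{2}+(r_o+\varepsilon_1)(1-\frac{\varepsilon_2}{2})\right)$. Fix the codeword $\mbox{\boldmath $x$}_m$, write $s_i=s(\hat x_i,x_{mi})$, let $A$ be the set of the $N_o(1-r_o-\varepsilon_1)$ coordinates of smallest reliability and $B$ its complement, so that $|A|=N_o(1-r_o-\varepsilon_1)$, $|B|=N_o(r_o+\varepsilon_1)$, $\mbox{\boldmath $q$}_k$ keeps $i\in A$ exactly when $\alpha_i>k\varepsilon_2$ and keeps every $i\in B$, and every $i\in B$ has $\alpha_i\ge\tau:=\max_{i\in A}\alpha_i$. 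Put $A^{\pm}=\{i\in A:s_i=\pm1\}$, $B^{\pm}=\{i\in B:s_i=\pm1\}$, $e_B=|B^{-}|$. Splitting $\mbox{\boldmath $\alpha$}\cdot\mbox{\boldmath $x$}_m=\sum_{i\in B}s_i\alpha_i+\sum_{i\in A}s_i\alpha_i$, the $B$-part is at most $|B^{+}|-\tau e_B$ (each positive term $s_i\alpha_i$ is $\le1$ and each negative term $s_i\alpha_i$ is $\le-\tau$), while for the $A$-part the identity $\alpha_i=\int_0^1\mathbf{1}[\alpha_i>t]\,dt$ gives $\sum_{i\in A}s_i\alpha_i=\int_0^{\tau}g_A(t)\,dt$ with $g_A(t):=\sum_{i\in A}s_i\mathbf{1}[\alpha_i>t]$, an integer step function that vanishes for $t\ge\tau$. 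Since $\mbox{\boldmath $q$}_k\cdot\mbox{\boldmath $x$}_m=\sum_{i\in B}s_i+g_A(k\varepsilon_2)$, the hypothesis is equivalent to $g_A(k\varepsilon_2)\le N_o(r_o+\varepsilon_1)-\sum_{i\in B}s_i=2e_B$ for all $k$ in range (and this also holds at, and beyond, the first grid point past $\tau$, where $g_A=0$).

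The crux is to show $\int_0^{\tau}g_A(t)\,dt\le 2e_B\tau+\frac{\varepsilon_2}{2}|A|$. I would partition $[0,\tau)$ into the grid cells $C_k=[k\varepsilon_2,(k+1)\varepsilon_2)$ (the last one truncated at $\tau$, where $g_A$ has already dropped to $0$), so that $g_A\le 2e_B$ holds at both grid endpoints of every cell. As $t$ increases across a cell, $g_A$ jumps up by $1$ each time $t$ passes an $A^{-}$-coordinate and down by $1$ at each $A^{+}$-coordinate; hence to reach a peak value $v$ inside a cell it must first climb up from the left endpoint, which needs at least $v-2e_B$ of the cell's $A^{-}$-coordinates, and then descend to the right endpoint, which needs at least $v-2e_B$ of its $A^{+}$-coordinates. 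Writing $n_k^{\pm}$ for the number of $A^{\pm}$-coordinates in $C_k$, this yields $(v-2e_B)^{+}\le\min(n_k^{-},n_k^{+})$, hence $\int_{C_k}(g_A-2e_B)^{+}\le\varepsilon_2\min(n_k^{-},n_k^{+})$. Summing over cells, and using $\sum_k\min(n_k^{-},n_k^{+})\le\min\!\left(\sum_k n_k^{-},\sum_k n_k^{+}\right)\le\min(|A^{-}|,|A^{+}|)\le\frac{1}{2}|A|$, gives $\int_0^{\tau}(g_A-2e_B)^{+}\le\frac{\varepsilon_2}{2}|A|$, and since $\int_0^{\tau}g_A\le 2e_B\tau+\int_0^{\tau}(g_A-2e_B)^{+}$ the estimate follows.

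Putting the two bounds together, $\mbox{\boldmath $\alpha$}\cdot\mbox{\boldmath $x$}_m\le\left(|B^{+}|-\tau e_B\right)+\left(2e_B\tau+\frac{\varepsilon_2}{2}|A|\right)=|B^{+}|+e_B\tau+\frac{\varepsilon_2}{2}|A|\le|B^{+}|+e_B+\frac{\varepsilon_2}{2}|A|=|B|+\frac{\varepsilon_2}{2}|A|$, because $\tau\le1$. Substituting $|B|=N_o(r_o+\varepsilon_1)$ and $|A|=N_o(1-r_o-\varepsilon_1)$ turns the right-hand side into exactly $N_o\!\left(\frac{\varepsilon_2}{2}+(r_o+\varepsilon_1)(1-\frac{\varepsilon_2}{2})\right)$, which is the contrapositive, and hence proves the theorem; the fact that it lands precisely on the threshold also shows the constant cannot be improved.

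The step I expect to be the real obstacle is the per-cell estimate. The naive argument would bound the excess of $g_A$ over $2e_B$ inside a cell only by the number of $A^{-}$-coordinates there, giving a total slack of $\varepsilon_2|A^{-}|$, which can be as large as $\varepsilon_2|A|$ and thus overshoots the target threshold by a factor of two; one must instead exploit the constraint $g_A\le 2e_B$ at \emph{both} endpoints of each cell together with the sign structure of the jumps to obtain the sharper $\min(n_k^{-},n_k^{+})$. A minor technical point is that a cell endpoint should not coincide with a coordinate's reliability, which one arranges by an arbitrarily small order-preserving perturbation of the $\alpha_i$ that leaves every grid-point value of $g_A$ (hence the hypothesis) unchanged and changes $\mbox{\boldmath $\alpha$}\cdot\mbox{\boldmath $x$}_m$ by an amount tending to zero.
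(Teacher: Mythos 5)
Your argument is correct, and it reaches the paper's threshold exactly, but it gets there by a genuinely different route. The paper follows Forney's classical quantization argument: it rounds the $N_o(1-r_o-\varepsilon_1)$ least reliable weights to the cell midpoints $c_j=(j-1/2)\varepsilon_2$, writes the rounded weight vector $\tilde{\mbox{\boldmath $\alpha$}}$ as a nonnegative combination $\sum_k\lambda_k\mbox{\boldmath $p$}_k$ of $0$--$1$ vectors with $\sum_k\lambda_k=1$ (the first $p$ of which are the $\mbox{\boldmath $q$}_k$ and the rest of which have too few ones to exceed the threshold), and extracts the slack $\frac{\varepsilon_2}{2}N_o(1-r_o-\varepsilon_1)$ from the per-coordinate rounding error $|\tilde{\alpha}_i-\alpha_i|\le\varepsilon_2/2$; an averaging/contradiction step then finishes. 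You instead work contrapositively with the unrounded weights, use the layer-cake identity to write the low-reliability contribution as $\int_0^{\tau}g_A(t)\,dt$, and extract the same $\frac{\varepsilon_2}{2}|A|$ slack from the per-cell jump-counting bound $(v-2e_B)^{+}\le\min(n_k^{-},n_k^{+})$ summed against $\min(|A^{-}|,|A^{+}|)\le|A|/2$. Both proofs share the grid skeleton (your integral decomposition of $\tilde{\mbox{\boldmath $\alpha$}}\cdot\mbox{\boldmath $x$}_m$ is essentially the paper's $\sum_k\lambda_k\mbox{\boldmath $p$}_k\cdot\mbox{\boldmath $x$}_m$ in disguise), but the key lemmas differ: the paper's rounding step is shorter and mechanical, while your counting lemma is the harder step and is, as you note, exactly where a naive bound loses a factor of two. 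What your version buys is an explicit accounting of where the constant comes from (and hence a sense of why it is sharp, though landing on the threshold is not by itself a proof of optimality --- that would require exhibiting a configuration attaining it); what the paper's version buys is brevity and the avoidance of the endpoint/tie-breaking perturbation you need at the end, which is handled correctly but is an extra technicality your route incurs.
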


\begin{proof}
Define a set of values $c_j=(j-1/2)\varepsilon_2$ for $1 \le j \le 1/\varepsilon_2$ and an integer $p=\lceil \alpha_{i_{N_o(1-r_o-\varepsilon_1)}}/\varepsilon_2\rceil$, where $1\le p\le 1/\varepsilon_2$. \footnote{Note that the value of $p$ cannot be $0$. Because if $p=0$, i.e., $\alpha_{i_{N_o(1-r_o-\varepsilon_1)}}=0$, then there are at least $N_o(1-r_o-\varepsilon_1)$ zeros in vector $\mbox{\boldmath $\alpha$}$. Consequently, $\mbox{\boldmath $\alpha$}\cdot\mbox{\boldmath $x$}_m\le N_o(r_o+\varepsilon_1)<N_o\left(\frac{\varepsilon_2}{2}+(r_o+\varepsilon_1)\left(1-\frac{\varepsilon_2}{2}\right)\right)$, which contradicts the assumption that $\mbox{\boldmath$\alpha$}\cdot\mbox{\boldmath $x$}_m>N_o\left (\frac{\varepsilon_2}{2}+(r_o+\varepsilon_1)(1-\frac{\varepsilon_2}{2})\right)$.}

Let
\begin{eqnarray}
&& \lambda_0 = c_1 \nonumber\\
&& \lambda_k = c_{k+1}-c_k, 1\le k\le p-1 \nonumber\\
&& \lambda_{p} = \alpha_{i_{N_o(1-r_o-\varepsilon_1)+1}}-c_p \nonumber\\
&& \lambda_h = \alpha_{i_{h-p+N_o(1-r_o-\varepsilon_1)+1}}-\alpha_{i_{h-p+N_o(1-r_o-\varepsilon_1)}},\nonumber \\
&& \qquad \mbox{if}\quad p< h<p+N_o(r_o+\varepsilon_1)\nonumber\\
&& \lambda_{p+N_o(r_o+\varepsilon_1)}= 1-\alpha_{i_{N_o}}.
\end{eqnarray}
We have
\begin{equation}
\sum_{k=0}^{j-1}\lambda_k= \left\{ \begin{array}{ll}c_j & 1 \le j \le p\\ \alpha_{i_{j-p+N_o(1-r_o-\varepsilon_1)}} & p < j \le p+N_o(r_o+\varepsilon_1)\end{array} \right . ,
\end{equation}
and
\begin{equation}
\sum_{k=0}^{p+N_o(r_o+\varepsilon_1)}\lambda_k=1.
\end{equation}

Define a new weight vector $\tilde{\mbox{\boldmath $\alpha$}}=[\tilde{\alpha}_1, \dots, \tilde{\alpha}_i, \dots, \tilde{\alpha}_{N_o}]$ with
\begin{equation}
\tilde{\alpha}_i= \left\{ \begin{array}{ll} \mbox{argmin}_{c_j, 1\le j \le p}|c_j-\alpha_i| & \alpha_i\le \alpha_{i_{N_o(1-r_o-\varepsilon_1)}}\\ \alpha_i & \alpha_i>\alpha_{i_{N_o(1-r_o-\varepsilon_1)}} \end{array} \right . .
\end{equation}
Define $\mbox{\boldmath $p$}_k=[p_k(\alpha_1), \dots, p_k(\alpha_i), \dots, p_k(\alpha_{N_o}) ]$ with $1\le k\le p+N_o(r_o+\varepsilon_1)$ such that for $0\le k<p$
\begin{equation}
\mbox{\boldmath $p$}_k = \mbox{\boldmath $q$}_k,
\end{equation}
and for $p\le k\le p+N_o(r_o+\varepsilon_1)$
\begin{equation}
\mbox{\boldmath $p$}_k(\alpha_i) = \left\{ \begin{array}{ll}0 & \alpha_i\le \alpha_{i_{k-p+N_o(1-r_o-\varepsilon_1)}}\\ 1 & \alpha_i>\alpha_{i_{k-p+N_o(1-r_o-\varepsilon_1)}} \end{array} \right . .
\end{equation}
We have
\begin{equation}
\mbox{\boldmath $\tilde{\alpha}$}=\sum_{k=0}^{p+N_o(r_o+\varepsilon_1)}\lambda_k\mbox{\boldmath $p$}_k.
\end{equation}

Define a set of indices
\begin{equation}
\mathcal{U}=\{i_1, i_2, \ldots, i_{N_o(1-r_o-\varepsilon_1)}\}.
\end{equation}
According to the definition of $\tilde{\alpha}_i$, for $i \notin \mathcal{U}$, $\tilde{\alpha}_i=\alpha_i$. Hence
\begin{equation}
\tilde{\mbox{\boldmath $\alpha$}} \cdot \mbox{\boldmath $x$}_m=\mbox{\boldmath $\alpha$} \cdot \mbox{\boldmath $x$}_m+\sum_{i \in \mathcal{U}}\left (\tilde{\alpha}_i-\alpha_i\right )s_i.
\end{equation}
Since $|\tilde {\alpha}_i-\alpha_i| \le \varepsilon_2/2$, and $s_i=\pm 1$, we have
\begin{equation}
\sum_{i \in \mathcal{U}}\left (\tilde {\alpha}_i-\alpha_i\right )s_i \ge -N_o(1-r_o-\varepsilon_1)\frac{\varepsilon_2}{2}.
\end{equation}
Consequently, $\mbox{\boldmath $\alpha$} \cdot \mbox{\boldmath $x$}_m>N_o\left(\frac{\varepsilon_2}{2}+(r_o+\varepsilon_1)\left(1-\frac{\varepsilon_2}{2}\right)\right)$ implies
\begin{equation}\label{AlphaX}
\mbox{\boldmath $\tilde{\alpha}$} \cdot \mbox{\boldmath $x$}_m>N_o(r_o+\varepsilon_1).
\end{equation}

If $\mbox{\boldmath $p$}_k \cdot \mbox{\boldmath $x$}_m\le N_o(r_o+\varepsilon_1)$ for all $\mbox{\boldmath $p$}_k$'s, then
\begin{eqnarray}
\mbox{\boldmath $\tilde{\alpha}$} \cdot \mbox{\boldmath $x$}_m&=&\sum_{k=0}^{p+N_o(r_o+\varepsilon_1)}\lambda_k\mbox{\boldmath $p$}_k\cdot \mbox{\boldmath $x$}_m\nonumber\\
&\le& N_o(r_o+\varepsilon_1)\sum_{k=0}^{p+N_o(r_o+\varepsilon_1)}\lambda_k \nonumber\\
&=&N_o(r_o+\varepsilon_1),
\end{eqnarray}
which contradicts (\ref{AlphaX}). Therefore, there must be some $\mbox{\boldmath $p$}_k$ that satisfies
\begin{equation}\label{PX}
\mbox{\boldmath $p$}_k\cdot \mbox{\boldmath $x$}_m>N_o(r_o+\varepsilon_1).
\end{equation}

Since for $k\ge p$, $\mbox{\boldmath $p$}_k$ has no more than $N_o(r_o+\varepsilon_1)$ number of $1$'s, which implies $\mbox{\boldmath $p$}_k\cdot \mbox{\boldmath $x$}_m\le N_o(r_o+\varepsilon_1)$, the vectors that satisfy (\ref{PX}) must exist among $\mbox{\boldmath $p$}_k$ with $1\le k<p$. In words, for some $k$, $\mbox{\boldmath $q$}_k\cdot\mbox{\boldmath
$x$}_m>N_o(r_o+\varepsilon_1)$.
\end{proof}

Theorems \ref{Theorem1} and \ref{Theorem2} indicate that, if $\mbox{\boldmath $x$}_m$ is transmitted and $\mbox{\boldmath$\alpha$}\cdot\mbox{\boldmath $x$}_m>N_o\left (\frac{\varepsilon_2}{2}+(r_o+\varepsilon_1)(1-\frac{\varepsilon_2}{2})\right)$, for some $0 \le k < 1/\varepsilon_2$, errors-and-erasures decoding specified by $\mbox{\boldmath $q$}_k$ (where symbols with $\mbox{\boldmath $q$}_k(\alpha_i)=0$ are erased) will output $\mbox{\boldmath $x$}_m$. Since the total number of $\mbox{\boldmath $q$}_k$ vectors is upper bounded by a constant $1/\varepsilon_2$, the outer code carries out errors-and-erasures decoding only for a constant number of times. Consequently, a GMD decoding that carries out errors-and-erasures decoding for all $\mbox{\boldmath $q$}_k$'s and compares their decoding outputs can recover $\mbox{\boldmath $x$}_m$ with a complexity of $O(N_o)$. Since the inner code length $N_i$ is fixed, the overall complexity is $O(N)$.

The following theorem gives an error probability bound for one-level concatenated codes with the revised GMD decoder.

\begin{theorem}{\label{Theorem3}}
Assume inner codes achieve Gallager's error exponent given in (\ref{GallagerE}). Let the reliability vector $\mbox{\boldmath $\alpha$}$ be generated according to Forney's algorithm presented in \cite[Section 4.2]{ref Forney66}. Let $\mbox{\boldmath $x$}_m$ be the transmitted outer codeword. For large enough $N$, error probability of the one-level concatenated codes is upper bounded by
\begin{eqnarray}
P_e &\le & P\left\{\mbox{\boldmath$\alpha$}\cdot \mbox{\boldmath $x$}_m\le N_o\left(\frac{\varepsilon_2}{2}+(r_o+\varepsilon_1)\left(1-\frac{\varepsilon_2}{2} \right)\right)\right\} \nonumber \\
&\le & \exp \left[-N\left(E_c(R)-\varepsilon\right)\right],
\end{eqnarray}
where $E_c(R)$ is Forney's error exponent given by (\ref{EcR}) and $\varepsilon$ is a function of $\varepsilon_1$ and $\varepsilon_2$ with $\varepsilon\rightarrow0$ if $\varepsilon_1, \varepsilon_2\rightarrow0$.
\end{theorem}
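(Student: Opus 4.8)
The plan is to prove the two displayed inequalities separately; the first follows quickly from results already in hand, and the second is the substantive part. For the first inequality I would expand the remark made right after Theorem~\ref{Theorem2}. Suppose $\mbox{\boldmath $x$}_m$ is transmitted and $\mbox{\boldmath $\alpha$}\cdot\mbox{\boldmath $x$}_m>N_o\theta$ with $\theta=\frac{\varepsilon_2}{2}+(r_o+\varepsilon_1)(1-\frac{\varepsilon_2}{2})$; a one-line computation shows $\theta>r_o+\varepsilon_1$. By Theorem~\ref{Theorem2} there is an index $k$, $0\le k<1/\varepsilon_2$, with $\mbox{\boldmath $q$}_k\cdot\mbox{\boldmath $x$}_m>N_o(r_o+\varepsilon_1)$. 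Writing $T$ for the number of non-erased positions in which $\hat x_i\neq x_{mi}$ and $D$ for the number of erased positions of $\mbox{\boldmath $q$}_k$, one has $\mbox{\boldmath $q$}_k\cdot\mbox{\boldmath $x$}_m=N_o-2T-D$, hence $2T+D<N_o(1-r_o-\varepsilon_1)$, so the errors-and-erasures decoding associated with $\mbox{\boldmath $q$}_k$ outputs $\mbox{\boldmath $x$}_m$; thus $\mbox{\boldmath $x$}_m$ is among the candidate codewords produced by the revised GMD decoder. Since $\mbox{\boldmath $\alpha$}\cdot\mbox{\boldmath $x$}_m>N_o(r_o+\varepsilon_1)$, Theorem~\ref{Theorem1} shows $\mbox{\boldmath $x$}_m$ is the unique codeword clearing that threshold, so the decoder selects it, which gives $P_e\le P\{\mbox{\boldmath $\alpha$}\cdot\mbox{\boldmath $x$}_m\le N_o\theta\}$.

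For the second inequality I would follow Forney's large-deviations analysis of GMD decoding. Because the channel is memoryless and the $N_o$ inner blocks use disjoint channel inputs, Forney's random-coding argument over the inner codes lets one treat the $N_o$ terms $\alpha_is_i$ as i.i.d.\ with a common distribution determined by the inner code and the channel, so that a Chernoff bound gives, for every $\rho\ge0$,
\[
P\{\mbox{\boldmath $\alpha$}\cdot\mbox{\boldmath $x$}_m\le N_o\theta\}\ \le\ \exp\!\left[N_o\left(\rho\theta+\log E\!\left[e^{-\rho\alpha_1s_1}\right]\right)\right].
\]
It then remains to show that, with $\mbox{\boldmath $\alpha$}$ generated by Forney's weighting rule and inner codes of rate $R/r_o$ and fixed length $N_i$ attaining Gallager's exponent,
\[
-\frac{1}{N_i}\inf_{\rho\ge0}\left\{\rho\theta+\log E\!\left[e^{-\rho\alpha_1s_1}\right]\right\}\ \ge\ (1-r_o)\,E\!\left(\frac{R}{r_o}\right)-\varepsilon',
\]
where $\varepsilon'\to0$ as $\varepsilon_1,\varepsilon_2\to0$ and $N_i\to\infty$. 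This is exactly the content of Forney's GMD exponent bound in \cite{ref Forney66}: his choice of the reliabilities controls $E[e^{-\rho\alpha_1s_1}]$ through the inner-code reliability function, so that optimizing $\rho$ (jointly with the input distribution, which is built into the inner-code exponent) recovers the factor $1-r_o$ rather than the $(1-r_o)/2$ one would get from hard-decision, errors-only decoding. The only change from Forney's setting is that the relevant threshold here is $N_o\theta$ with $\theta\to r_o$ as $\varepsilon_1,\varepsilon_2\to0$, instead of $\approx N_or_o$; by continuity of the exponent in the threshold this costs only $\varepsilon'$. Multiplying by $N_o$, using $N=N_oN_i$, and finally choosing the construction parameter $r_o$ to attain the maximum in (\ref{EcR}) yields $P_e\le\exp[-N(E_c(R)-\varepsilon)]$ with $\varepsilon$ a function of $\varepsilon_1,\varepsilon_2$ that vanishes with them.

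The main obstacle is the moment-generating-function estimate in the last display: one must carry Forney's bound on $E[e^{-\rho\alpha_1s_1}]$ through while (i) accommodating the relaxed threshold $N_o\theta$ and (ii) working with a \emph{fixed} inner length $N_i$, so the inner code attains Gallager's exponent only up to an $O(1/N_i)$ penalty. This penalty, the gap $\theta-r_o=\Theta(\varepsilon_1+\varepsilon_2)$, and the slack in the random-coding reduction must all be absorbed into the single error term $\varepsilon$. The natural way to handle this is to fix $N_i$ large enough (it remains a constant, so the overall $O(N)$ complexity is unaffected) to make the finite-length penalty as small as desired, then let $N_o\to\infty$; because the per-symbol Chernoff bound carries no binomial-coefficient factor, no additional $N_o$-exponential correction appears, and the bound holds for all sufficiently large $N$.
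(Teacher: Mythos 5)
Your proposal is correct and follows essentially the same route as the paper, whose proof consists precisely of (i) using Theorems~\ref{Theorem1} and~\ref{Theorem2} to reduce the error event to $\{\mbox{\boldmath $\alpha$}\cdot\mbox{\boldmath $x$}_m\le N_o(\frac{\varepsilon_2}{2}+(r_o+\varepsilon_1)(1-\frac{\varepsilon_2}{2}))\}$ and (ii) invoking Forney's Chernoff-bound analysis of this tail event from Section 4.2 of \cite{ref Forney66}, with Theorem~\ref{Theorem2} substituted for Forney's Theorem 3.2. Your expansion of both steps, including the $N_o-2T-D$ accounting for the errors-and-erasures decoder and the absorption of the threshold shift and finite-$N_i$ penalty into $\varepsilon$, matches the intended argument.
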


The proof of Theorem \ref{Theorem3} can be obtained by first replacing Theorem 3.2 in \cite{ref Forney66} with Theorem \ref{Theorem2}, and then following Forney's analysis presented in \cite[Section 4.2]{ref Forney66}.

The difference between Forney's and the revised GMD decoding schemes lies in the definition of errors-and-erasures decodable vectors $\mbox{\boldmath $q$}_k$, the number of which determines the decoding complexity. Forney's GMD decoding needs to carry out errors-and-erasures decoding for a number of times linear in $N_o$, whereas ours for a constant number of times. Although the idea behind the revised GMD decoding is similar to Justesen's GMD algorithm \cite{ref Justesen72}, Justesen's work has focused on error-correction codes where inner codes forward Hamming distance information (in the form of an $\mbox {\boldmath $\alpha$}$ vector) to the outer code.

Applying the revised GMD algorithm to multi-level concatenated codes \cite{ref Blokh82}\cite{ref Barg05} is quite straightforward. Achievable error exponent of an $m$-level concatenated codes is given in the following Theorem.

\begin{theorem}{\label{Theorem4}}
For a discrete-time memoryless channel with capacity $\mathcal{C}$, for any $\varepsilon>0$ and any integer $m>0$, one can construct a sequence of $m$-level concatenated codes whose encoding/decoding complexity is linear in $N$, and whose error probability is bounded by
\begin{eqnarray}\label{MExponent}
\lim_{N\rightarrow\infty}-\frac{\log P_e}{N}\ge E^{(m)}(R)-\varepsilon, \qquad \qquad \qquad \qquad \qquad \nonumber\\
E^{(m)}(R)=\max_{p_X,r_o\in {\left [\frac{R}{\mathcal{C}},1\right]}}\frac{\frac{R}{r_o}-R}{\frac{R}{r_om}\sum_{i=1}^m\left [E_L\left((\frac {i}{m})\frac{R}{r_o},p_X\right )\right ]^{-1}} \nonumber \\
\end{eqnarray}
\end{theorem}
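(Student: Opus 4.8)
\emph{Proof proposal.}
The plan is to lift the one-level result of Theorem \ref{Theorem3} to $m$ levels by following the generalized (multi-level) concatenated code construction of \cite{ref Blokh82}\cite{ref Barg05}, with two substitutions: every outer code is a Guruswami--Indyk linear-time encodable/decodable nearly-MDS code \cite{ref Guruswami05}, and every outer decoding is carried out by the revised GMD decoder of Theorems \ref{Theorem1}--\ref{Theorem2}. First I would fix an input distribution $p_X$ and a parameter $r_o\in[R/\mathcal{C},1]$, set the total inner rate to $R_i=R/r_o$ with $R=r_oR_i$, and build a chain of inner subcodes $B^{(0)}\supseteq B^{(1)}\supseteq\dots\supseteq B^{(m)}$ of a common \emph{fixed} block length $N_i$, drawn from a random ensemble with single-letter distribution $p_X$, in which the $j$-th layer contributes an equal inner-rate increment $R_i/m$; thus layer $j$ operates at cumulative inner rate $S_j=(j/m)R_i$. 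For the $j$-th layer take a Guruswami--Indyk code of length $N_o$ over the (large, since $N_i$ is large) layer-$j$ coset alphabet, with outer rate $\rho_j$ chosen so that $1-\rho_j$ equals, up to the slack $\varepsilon_1$, a common constant times $1/E_L(S_j,p_X)$; the constant is pinned down by the overall-rate identity $R=\frac{R_i}{m}\sum_{j=1}^m\rho_j$, equivalently $\sum_{j=1}^m(1-\rho_j)=m(1-r_o)$.

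Decoding proceeds by successive cancellation across the $m$ layers. At layer $j$, assuming layers $1,\dots,j-1$ were decoded correctly, each of the $N_o$ inner blocks outputs an estimated coset symbol together with a reliability weight formed by Forney's rule of \cite[Section 4.2]{ref Forney66} applied to the effective layer-$j$ super-channel, and outer code $\mathcal{A}_j$ is decoded by the revised GMD procedure, i.e., errors-and-erasures decoding for each of the at most $1/\varepsilon_2$ quantized weight vectors $\mbox{\boldmath $q$}_k$, keeping the codeword that passes the test of Theorem \ref{Theorem1}. The complexity count is then immediate: each inner block has fixed length $N_i$, so all inner decodings at a layer cost $O(N_o)$; each outer decoding costs $O(N_o)$ and only $1/\varepsilon_2=O(1)$ of them are run per layer; and there are $m=O(1)$ layers, so both encoding and decoding are $O(mN_o/\varepsilon_2)=O(N)$.

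For the error probability I would union-bound over layers, $P_e\le\sum_{j=1}^m P_{e|j}$, where $P_{e|j}$ is the probability that layer $j$ is decoded incorrectly given that layers $1,\dots,j-1$ were decoded correctly, and bound each $P_{e|j}$ by repeating the proof of Theorem \ref{Theorem3} with Forney's Theorem~3.2 replaced by our Theorem \ref{Theorem2} and Gallager's exponent replaced by $E_L(S_j,p_X)$ --- the latter being the best single-block exponent available for the effective layer-$j$ channel at rate $S_j$, with the same $p_X$ forced on all layers because the inner subcodes are cut from one shared ensemble (this is the ingredient inherited from \cite{ref Blokh82}\cite{ref Barg05}). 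This gives $P_{e|j}\le\exp[-N((1-\rho_j)E_L(S_j,p_X)-\varepsilon')]$. By the rate allocation above, $(1-\rho_j)E_L(S_j,p_X)$ equals the common value $\hat E=m(1-r_o)/\sum_{j=1}^m E_L(S_j,p_X)^{-1}$ for every $j$, so $P_e\le m\exp[-N(\hat E-\varepsilon')]$ and $\liminf_{N\to\infty}-\frac1N\log P_e\ge\hat E-\varepsilon'$. Substituting $S_j=(j/m)(R/r_o)$ and $R=r_oR_i$ rewrites $\hat E$ as $(\frac{R}{r_o}-R)/(\frac{R}{r_om}\sum_{j=1}^m [E_L((j/m)(R/r_o),p_X)]^{-1})$, i.e., exactly the objective in (\ref{MExponent}); maximizing over $p_X$ and $r_o$ yields $E^{(m)}(R)$, and letting $N_o\to\infty$ (so $\varepsilon_1\to0$) and $N_i\to\infty$ (so $\varepsilon_2\to0$ and the inner codes approach their exponents) drives $\varepsilon'\to0$.

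The step I expect to require the most care is the per-layer exponent bound: one must pin down the channel seen by layer $j$ after cancellation --- its inputs are the layer-$j$ coset labels while its ``noise'' absorbs both the physical channel and the still-undecoded layers $j+1,\dots,m$ --- verify that Forney's reliability weights remain admissible statistics for it so that Theorems \ref{Theorem1}--\ref{Theorem2} apply unchanged, and confirm that its achievable single-block exponent at rate $S_j$ is $E_L(S_j,p_X)$ under a single shared $p_X$. This is precisely where the analysis of \cite{ref Blokh82}\cite{ref Barg05} is invoked; once it is granted, the remaining pieces --- the chain-of-subcodes construction, the rate-allocation algebra, and the linear-complexity count --- are routine, since Theorems \ref{Theorem1}--\ref{Theorem3} have already reduced each layer to a constant number of linear-time outer decodings.
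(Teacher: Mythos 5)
Your proposal follows exactly the route the paper takes: the paper's entire proof of Theorem \ref{Theorem4} is the single sentence that it ``can be obtained by combining Theorem \ref{Theorem3} and the derivation of $E^{(m)}(R)$ in \cite{ref Blokh82}\cite{ref Barg05},'' and your write-up is a correct, detailed expansion of precisely that combination (per-layer revised GMD decoding, successive cancellation, equal-exponent rate allocation, union bound, and the $O(N)$ complexity count). The algebra identifying $\hat E$ with the objective in (\ref{MExponent}) checks out, so there is nothing to correct.
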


The proof of Theorem \ref{Theorem4} can be obtained by combining Theorem \ref{Theorem3} and the derivation of $E^{(m)}(R)$ in \cite{ref Blokh82}\cite{ref Barg05}.

Note that $\lim_{m\rightarrow \infty}E^{(m)}(R)=E^{(\infty)}(R)$, where $E^{(\infty)}(R)$ is the Blokh-Zyablov error exponent given in (\ref{BZBound}). Theorem \ref{Theorem4} implies that, for discrete-time memoryless channels, Blokh-Zyablov error exponent can be arbitrarily approached with linear encoding/decoding complexity.

\section{Conclusions}
We proposed a revised GMD decoding algorithm for concatenated codes over general discrete-time memoryless channels. By combining the GMD algorithm with Guruswami and Indyk's error correction codes, we showed that Forney's and Blokh-Zyablov error exponents can be arbitrarily approached by one-level and multi-level concatenated coding schemes, respectively, with linear encoding/decoding complexity.

\section*{Acknowledgment}
The authors would like to thank Professor Alexander Barg for his help on multi-level concatenated codes.



%




\end{document}